\def\QED{\protect{\flushright{$\Box$}}}
\def\root{\mathop{\rm root}}
\def\Ker{\mathop{\rm Ker}}
\def\rank{\mathop{\rm rank}}
\def\bbR{\mathbb{R}}
\def\bbN{\mathbb{N}}
\def\figfile{./nct}
\def\mpfig#1{\DeclareGraphicsRule{.#1}{mps}{*}{} 
			 \vcenter{\hbox{\includegraphics{\figfile.#1}}}}
\begin{document}
\mainmatter              

\title{Trees in the Real Field}

\author{Alessandro Betti\inst{1}\inst{2} \and Marco Gori\inst{2}}

\tocauthor{Alessandro Betti, Marco Gori}

\institute{University of Florence, Florence, Italy,\\
\email{alessandro.betti@unifi.it}
\and
SAILab,
University of Siena, Siena, Italy,\\
\email{marco@diism.unisi.it}\\
WWW home page: \texttt{http://sailab.diism.unisi.it}
}

\maketitle              

\begin{abstract}
This paper proposes an algebraic view of trees 
which opens the doors to an alternative computational scheme 
with respect to classic algorithms. In particular, it is shown that
this view is very well-suited for machine learning and computational
linguistics.
\end{abstract}

\section{Introduction}
In the last few years models of deep learning have been successfully applied
to computational linguistics. Amongst others, the translation problem has
benefited
significantly from simple approaches based on recurrent neural networks. In particular,
because of the classic problem of capturing long-term dependencies~\cite{Bengio_trnn93}, 
LSTM~\cite{Hochreiter:1997:LSM} architectures have been mostly used which can better deal with this classic problem. 

In this paper we go beyond this approach and assume to characterize linguistic 
production by means of generative trees by relying on the principle that the complexity of the 
problem of long-term dependencies is dramatically reduced because of the 
exponential growth of nodes of the trees with respect to their height. 
In general the relations between trees and their corresponding linear encoding is 
not easy to grasp. For example, when restricting to binary trees, it can be proven
that we need a pair of traversals to fully characterize a given tree, one of which may be
the symmetric one~\cite{taocp1}. However, whenever a sequence presents a
certain degree of regularity, the ambition arises to establish a bijection with a
corresponding tree (e.g. the parsing tree).

While encoding mechanisms are quite straightforward to design every time that
it is possible to assign to each sequence a tree-like structure; it is
sufficient to propagate the information (for example with a linear
scheme) through the nodes up to the root of the
tree~(\cite{frasconi1998general}), it is much harder
to came up with a decoding scheme that generates the translated
sequence.
Here we prove that we can construct
a decoding scheme that naturally extend those used nowadays in
recurrent neural nets that can be potentially very interesting in computational
 linguistics.

 \section{Uniform real-valued tree representations}
A binary tree is recursively defined as
\begin{equation}
  {\tt T}=
  \begin{cases}
    {\tt T}_{\emptyset} &        \text{basis}\\
     ({\tt L},{\tt y},{\tt R}) & \text{induction} 
    \end{cases}
\label{TreeDef}
\end{equation}
where ${\tt T}_{\emptyset}$ is the empty tree,
${\tt y} \in \Sigma$ is the labeled root, 
which takes on values from the alphabet $\Sigma$,
{\tt L} (Left), and {\tt R} (Right) are trees.
We assume that we are given a {\em coding
function} $\ell: \Sigma \rightarrow \mathscr{Y}  \subset \bbR^{p}$, so as
the nodes of the tree are related to an associated point\footnote{In the 
following we will often regard the elements of $\tt T$ as elements 
of $\bbR^p$, without mentioning function $\ell$ explicitly.}
 of $\mathscr{Y}$.
Now, let us consider the the pair
\begin{eqnarray}
\label{RealValuedTR}
 	&i.& T:=\left(L,x,R\right)\\
\nonumber
	&ii.& \gamma: \mathscr{X} \subset \bbR^{n} \rightarrow \mathscr{Y},
              \quad
	\gamma(x):=Cx
\end{eqnarray}
which consists of the triple $\left(L,x,R\right)$ and of the 
{\em linear labeling function} $\gamma$, which returns points, that will be
related to the labels of {\tt T}.
In the triple, we have 
$x \in \bbR^{n}, \ L,R \in \bbR^{n\times n}$. Basically, we
introduce a computational scheme on the {\em embedding space} $\mathscr{X}$.
If $x = 0$ then we assume that
$\left(L,0,R\right) \sim (0,0,0) := T_{\emptyset}$.
We want to explore the relations between the tree definition~(\ref{TreeDef})
and the related real-valued representation given by equations~(\ref{RealValuedTR}). 
To this end, we start noticing that
the {\tt void} tree ${\tt T}_{\emptyset}$ can be associated with $T_{\emptyset}$.
The idea is that we can specify a tree ${\tt T}$ once the triple $(L,x,R)$ and $C$ 
are given. Beginning from $Cx=\root(\tt T)$, we process the children of
the root by applying $L$ and $R$ to $x$, 
so that $CRx$ is the right child and $CLx$ is the left
child. Then the left child of the left child of the root is
obtained as $CLLx$, and the right child of the left child of the root
as $CRLx$, and so on and so forth, until we find, for each branch of the tree, a node
$l \in \bbR^{n}$ for which $Ll=Rl=0$. This will be the leaf of that particular
path, and we will say that the children of the leaves are buds; more
generally every null node will be denoted as a bud.

We say that $(L,x,R)$ is an $n$-dimensional {\it real                           
representation\/} of the obtained tree $\tt T$.

In order to get an insight on this construction let us consider the following examples.

\begin{example}
The first non-trivial example is the         
tree that consists  of the root only. In our representation this tree is
obtained by picking up any two matrices $L$ and $R$, such
that $x$ in their kernel, that is $L x = R x = 0$.
The simplest next example is given by
\[(L,x,R)=\mpfig1\]
The decoding equations that defines this tree are     
\[\begin{cases}Cx=\root(\tt T);\\ CRx=y(R),\end{cases}\quad
  \begin{cases}CLx=0,& \text{bud {\color{blue}$1$}};\\          
                        CLRx=0,& \text{bud {\color{blue}$2$}};\\     
                        CR^2x=0,& \text{bud {\color{blue}$3$}},\end{cases}\]
They are conveniently separated into the  ``node conditions''  and ``bud conditions''.
In order to be even more explicit consider the case $C=I$, $x=(1,0)'$ 
and $y(R)=(0,1)'$, then it is easy to check that
\begin{equation}                                
  \mpfig2=\left(\begin{pmatrix}0&0\\ 0&0\end{pmatrix},\begin{pmatrix} 1\\ 0
    \end{pmatrix},                      
\begin{pmatrix}0&0\\1&0\end{pmatrix}\right). \label{ex1}
\end{equation}                               
We can easily see that in this special case, this representation 
is unique in $\bbR^{2}$.
\label{SimplestEx}                                                     
\end{example}

As soon as we think about the next example with two nodes $\mpfig3$,
a symmetry property of the decoding scheme becomes evident. 
Given $\tt T$ let us define the symmetric left-right
${\tt T}'$ as the tree that one obtains from $\tt T$ by 
recursively exchanging the left with the right subtrees. For
example
$${\tt T}=\mpfig4,\qquad {\tt T}'=\mpfig5,$$
are related by the defined symmetry operation.
Clearly, for those trees we can state an immediate property
on their representation.

\begin{proposition}
Let $\tt T$ and ${\tt T}'$ be related by {\em left-right symmetry} and let
$(L,x,R)$ be a real representation of $\tt T$. Then $(R,x,L)$ is the
representation of ${\tt T}'$.
\end{proposition}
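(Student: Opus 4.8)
The plan is to argue by structural induction on the tree, after first turning the informal ``decoding'' procedure described above into an explicit recursive map. Given the matrix $C$ and a triple $(A,v,B)$ with $v\in\bbR^{n}$ and $A,B\in\bbR^{n\times n}$, define $\mathrm{dec}_{C}(A,v,B)={\tt T}_{\emptyset}$ if $v=0$, and otherwise
\begin{equation*}
\mathrm{dec}_{C}(A,v,B)=\bigl(\mathrm{dec}_{C}(A,Av,B),\,Cv,\,\mathrm{dec}_{C}(A,Bv,B)\bigr).
\end{equation*}
This is exactly the rule ``$Cv$ is the root, $Av$ generates the left subtree, $Bv$ generates the right subtree'' read off from the construction, so that $(L,x,R)$ being an $n$-dimensional real representation of ${\tt T}$ means ${\tt T}=\mathrm{dec}_{C}(L,x,R)$. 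Similarly I would encode the left--right symmetry as the map $\sigma$ with $\sigma({\tt T}_{\emptyset})={\tt T}_{\emptyset}$ and $\sigma\bigl(({\tt L},{\tt y},{\tt R})\bigr)=\bigl(\sigma({\tt R}),{\tt y},\sigma({\tt L})\bigr)$, so that ${\tt T}'=\sigma({\tt T})$. With this notation the proposition is the single identity $\mathrm{dec}_{C}(R,x,L)=\sigma\bigl(\mathrm{dec}_{C}(L,x,R)\bigr)$, which I would establish for all $x$, $L$, $R$ at once.

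The induction is on the height of $\mathrm{dec}_{C}(L,x,R)$ (equivalently, a well-founded induction following the recursion of $\mathrm{dec}_{C}$). Base case $x=0$: both sides collapse to ${\tt T}_{\emptyset}$, using $\sigma({\tt T}_{\emptyset})={\tt T}_{\emptyset}$. Inductive step $x\neq 0$: unfolding the definition on the left gives $\mathrm{dec}_{C}(R,x,L)=\bigl(\mathrm{dec}_{C}(R,Rx,L),\,Cx,\,\mathrm{dec}_{C}(R,Lx,L)\bigr)$, since in the swapped triple $R$ occupies the ``left'' slot and $L$ the ``right'' slot. The arguments $Rx$ and $Lx$ generate the strictly shorter right and left subtrees of $\mathrm{dec}_{C}(L,x,R)$, so the inductive hypothesis applies and yields $\mathrm{dec}_{C}(R,Rx,L)=\sigma\bigl(\mathrm{dec}_{C}(L,Rx,R)\bigr)$ and $\mathrm{dec}_{C}(R,Lx,L)=\sigma\bigl(\mathrm{dec}_{C}(L,Lx,R)\bigr)$. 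Substituting, and comparing with $\sigma\bigl(\mathrm{dec}_{C}(L,x,R)\bigr)=\bigl(\sigma(\mathrm{dec}_{C}(L,Rx,R)),\,Cx,\,\sigma(\mathrm{dec}_{C}(L,Lx,R))\bigr)$ — obtained by unfolding $\mathrm{dec}_{C}(L,x,R)$ and then applying $\sigma$ — shows the two trees agree at the root and on both subtrees, closing the induction.

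The only genuine delicacy is the one already latent in the original construction: one must know that the recursion defining $\mathrm{dec}_{C}$ terminates, i.e.\ that along every branch the products $M_{k}\cdots M_{1}x$ with each $M_{i}\in\{L,R\}$ eventually hit a bud, so that ``height'' is a legitimate induction parameter. This is harmless here, because swapping $L$ and $R$ does not change which such products vanish, so the triple $(R,x,L)$ terminates exactly when $(L,x,R)$ does and the induction is well posed. Everything else is bookkeeping, and the proof merely makes precise the evident fact that $L$ plays in $(R,x,L)$ the structural role that $R$ played in $(L,x,R)$, while $\sigma$ is precisely the operation swapping those roles level by level.
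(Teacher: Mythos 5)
Your proof is correct and simply fills in the details behind the paper's one-word proof (``Straightforward''): formalizing the decoder and the mirror map and checking the claimed identity by structural induction is exactly the intended argument. The only caveat, which you already flag yourself, is that induction on height presupposes the decoded tree is finite; for the infinite trees the paper also allows, the same conclusion follows more directly by noting that the node reached along a given left--right path is $CM_{k}\cdots M_{1}x$ with each $M_{i}\in\{L,R\}$, and swapping $L\leftrightarrow R$ mirrors every such path, which requires no induction at all.
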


\begin{proof}
Straightforward.\QED
\end{proof}

This result immediately shows us when looking at the tree given by~(\ref{ex1}),
that we have
$$\mpfig6=\left(\begin{pmatrix}0&0\\ 1&0\end{pmatrix},\begin{pmatrix}1\\0
  \end{pmatrix},                      
\begin{pmatrix}0&0\\ 0&0\end{pmatrix}\right).$$

 \begin{example} 
 	In this case  we show the role of the embedding space 
	$\mathscr{X} \subset \bbR^{n}$. In particular, we will see that the decoding
	might not be solvable at certain dimensions and that there could be also
	infinite solutions.
	Let us consider the following tree with the associated {\it decoding equations}
	\[(L,x,R)=\mpfig8\quad,\qquad
	\begin{cases}Cx=\root(\tt T);\\ CLx=y(L);\\ CRLx=y(RL),\end{cases}
        \quad 
	\begin{cases}CL^2x=0,& \text{bud {\color{blue}$1$}};\\  
		   CLRLx=0,& \text{bud {\color{blue}$2$}};\\ 
		   CR^2Lx=0, & \text{bud {\color{blue}$3$}};\\       
           CRx=0,& \text{bud {\color{blue}$4$}}.\end{cases}\]  
    We consider two different cases $n=2$ and $n=3$.   

	\begin{itemize}
	\item {\bf Case $n=2$. \enspace}
	Let us consider $n=2$ and assume $C=I$. In addition, let us assume that
	the nodes of $\tt T$ are coded by 
	\begin{eqnarray*}
		\root({\tt T}) = (1,0)^{\prime}, \ y(L) = (0,1)^{\prime}, 
		\ y(RL) = (1,1)^{\prime}.
	\end{eqnarray*}
	From $C L^{2} x = 0$ and from $C L x = y(L)$
	we get $L (Lx) = 0$, that is $L y(L)=0$. This yields 
	a constraint on the structure of $L$; we have
	\[
		\begin{pmatrix}
			l_{11} & l_{12}\\
			l_{21} & l_{22}\end{pmatrix}
		\cdot\begin{pmatrix}0\\ 1\end{pmatrix}
		= \begin{pmatrix} 0\\ 0\end{pmatrix}
		\to
		L =\begin{pmatrix}
			l_{11} & 0\\
			l_{21} & 0\end{pmatrix}. 	
	\]
	Likewise from $C R L x = y(RL)$ we get
	\[
	\begin{pmatrix} r_{11} & r_{12}\\
			r_{21} & r_{22}\end{pmatrix}
		\cdot\begin{pmatrix} 0\\ 1\end{pmatrix}
		= \begin{pmatrix} 1\\ 1\end{pmatrix}
		\to
		R = \begin{pmatrix}
			r_{11} & 1\\
			r_{21} & 1\end{pmatrix} 
	\]
	From $C L R L x = 0$ we get
	\begin{eqnarray}
	\label{partialres1}
		l_{11}(r_{11} l_{11} + l_{21})= 0\\
	\label{partialres1}
		l_{21}(r_{21} l_{11}  + l_{21})= 0
	\end{eqnarray}
	Now, let $x = (x_{1},x_{2})^{\prime}$ be. From $L x=y(L)$  we get 
	$l_{11} x_{1} = 0$ and $l_{21} x_{1} = 1$. Then $l_{11} = 0$, which, in turn, 
	satisfies~(\ref{partialres1}). Then, from~(\ref{partialres1}) we get
	$l_{21}=0$. Then, we end up into an impossible satisfaction of
	 $l_{21} x_{1} = 1$. 
	 
	 \item {\bf Case $n=3$. \enspace}
	 Let us consider $n=3$ and still assume $C=I$. In addition, let us assume that
	the nodes of $\tt T$ are coded by 
\[ \root({\tt T}) = (1,0,0)^{\prime}, \ y(L) = (0,1,0)^{\prime}, 
		\ y(RL) = (0,0,1)^{\prime}.\]
	From $C x = \root({\tt T})$ we get $x = (1,0,0)$.
	From $C L^{2} x = 0$ and from $C L x = y(L)$
	we get $L (Lx) = 0$, that is $L y(L)=0$. 
	This yields 
	a constraint on the structure of $L$; we have
	\[
		\begin{pmatrix}
			l_{11} & l_{12} & l_{13}\\
			l_{21} & l_{22} & l_{23}\\
			l_{31} & l_{32} & l_{33}\end{pmatrix}
                      \cdot\begin{pmatrix}0\\ 1\\ 0\end{pmatrix}
		=\begin{pmatrix}0\\ 0\\ 0\end{pmatrix}
		\to
		L = 
	\begin{pmatrix}
			l_{11} & 0  & l_{13}\\
			l_{21} & 0 & l_{23}\\
			l_{31} & 0 & l_{33}\end{pmatrix}. 	
	\]
	From $L x = y(L)$ we get
	\[      \begin{pmatrix}
			l_{11} & 0  & l_{13}\\
			l_{21} & 0 & l_{23} \\
			l_{31} & 0 & l_{33}
		\end{pmatrix}
		\cdot 
		\begin{pmatrix}
		1 \\
		0 \\
		0
		\end{pmatrix}
		=
		\begin{pmatrix}
		0 \\
		1 \\
		0
		\end{pmatrix}
	\]
	that is $l_{21}=1$ and $l_{11}=l_{31}=0$.	
	Likewise from $C R L x = y(RL)$ we get
	\[
		\begin{pmatrix}
			r_{11} & r_{12} & r_{13} \\
			r_{21} & r_{22} & r_{23} \\
			r_{31} & r_{32} & r_{33}
		\end{pmatrix}\cdot
		\begin{pmatrix}
		0 \\ 1 \\ 0
		\end{pmatrix}
		= 
		\begin{pmatrix}
		0 \\ 0 \\ 1
		\end{pmatrix}\to
		R = 
		\begin{pmatrix}
			r_{11} & 0 & r_{13} \\
			r_{21} & 0 & r_{23} \\
			r_{31} & 1 & r_{33}
		\end{pmatrix}.
	\]
	From $C L R L x = 0$ we get
	\[
		\begin{pmatrix}
			0 & 0  & l_{13}\\
			1 & 0 & l_{23} \\
			0 & 0 & l_{33}
		\end{pmatrix}
		\cdot
		\begin{pmatrix}
			r_{11} & 0 & r_{13} \\
			r_{21} & 0 & r_{23} \\
			r_{31} & 1 & r_{33}
		\end{pmatrix}
		\cdot
		\begin{pmatrix}
			0 & 0  & l_{13}\\
			1 & 0 & l_{23} \\
			0 & 0 & l_{33}
		\end{pmatrix}
		\cdot
		\begin{pmatrix}
			1 \\
			0 \\
			0
		\end{pmatrix}
		=
		\begin{pmatrix}
			0 \\
			0 \\
			0
		\end{pmatrix}
	\]
	Hence, 
	\[
		\begin{pmatrix}
			0 & 0  & l_{13}\\
			1 & 0 & l_{23} \\
			0 & 0 & l_{33}
		\end{pmatrix}
		\cdot
		\begin{pmatrix}
			0\\
			0 \\
			1
		\end{pmatrix}
		=
		\begin{pmatrix}
			0 \\
			0 \\
			0
		\end{pmatrix},
	\]
	which is  satisfied if $l_{13}=l_{23}=l_{33}=0$.
	
	From $C R^{2} L x = 0$ we get
	\[
		\begin{pmatrix}
			r_{11} & 0 & r_{13} \\
			r_{21} & 0 & r_{23} \\
			r_{31} & 1 & r_{33}
		\end{pmatrix}
		\cdot 
		\begin{pmatrix}
			r_{11} & 0 & r_{13} \\
			r_{21} & 0 & r_{23} \\
			r_{31} & 1 & r_{33}
		\end{pmatrix}
		\cdot 
		\begin{pmatrix}
			0 \\
			1 \\
			0
		\end{pmatrix}
		=
		\begin{pmatrix}
			0 \\
			0 \\
			0
		\end{pmatrix}
		\to
		\begin{pmatrix}
			r_{11} & 0 & r_{13} \\
			r_{21} & 0 & r_{23} \\
			r_{31} & 1 & r_{33}
		\end{pmatrix}
		\cdot 
		\begin{pmatrix}
			0 \\
			0 \\
			1
		\end{pmatrix}
		=
		\begin{pmatrix}
			0 \\
			0 \\
			0
		\end{pmatrix}.
	\]
	Finally, from $R x = 0$ we need $r_{11}=0$. Then
        we conclude that $L$ and $R$ are solutions 
	whenever they have the structure
	\[
		L =
		\begin{pmatrix}
			0 & 0  & 0\\
			1 & 0 & 0 \\
			0 & 0 & 0
		\end{pmatrix}\qquad
		R = 
		\begin{pmatrix}
			0 & 0 & 0 \\
			r_{21} & 0 & 0 \\
			r_{31} & 1 & 0
		\end{pmatrix}.
	\]
	Notice that in this case we discover infinite solutions. 
	In addition, it is worth mentioning that this solution originates from the required
	labeling, since it immediately requires to choose $x=(1,0,0)$. This 
	makes it possible to satisfy the matrix monomial equations without
	requiring strong nilpotent conditions on the matrices. In addition, in
	this case, there is no solution for any $x$, since otherwise we 
	need to require $R=0$. As a consequence, the other labelling conditions
	would not be met.
	If we assume to keep a representation based on the above matrices $L,R$
	then a different choice of $x$ may led to a completely different tree.
	For example, we can easily see that the choices $x=(0,1,0)^{\prime}, 
	(0,0,1)^{\prime}$ yield infinite trees. 
\end{itemize}
\label{ThreeNodeEx}
\end{example}
Interestingly, the generation of infinite trees is not an exception, but quite a 
common property of the introduced generative scheme. 

Let us consider a simple example that clearly shows the
possible explosion of the introduced generation scheme. 
Let us consider a tree whose
elements are two dimensional vectors, and consider a two dimensional
representation; in addition, for the sake of simplicity, let us assume that
 $C=I$ and $\root({\tt T})=(1,0)'$. Then let us assume that $R$ is a $\pi$ rotation and $L$
is a projection onto the $y$ axis:\DeclareGraphicsRule{.7}{mps}{*}{}
{\parshape 6 0pt 23pc 0pt 24pc 0pt 25pc 0pt 26pc 0pt 27pc 0pt 24pc\parfillskip=0pt
\smash{\raise-5pt\rlap{\kern5pt{\includegraphics{\figfile.7}}}}\par}
\[x=\begin{pmatrix}1\\0\end{pmatrix},\quad
  L=\begin{pmatrix}0&0\\0&1\end{pmatrix},\quad
R=\begin{pmatrix}-1&0\\0&-1\end{pmatrix}.\]
An infinite tree with flipping labels is generated that is
shown in  the side figure.

As shown in the previous examples,
we are interested in solving equations involving  monomials of matrices.
 Let us focus on the algebraic side and consider the following example.
\begin{example}
	Let us consider the monomial equation
	\begin{eqnarray}
		L R = 0.
	\label{2NipPotLR}
	\end{eqnarray}
	What are the non-null matrices $L$ and $R$ which satisfy this equation?
	Clearly, equations like $L^{2}=0$ and $R^{2}=0$ define nilpotent matrices
	of order $2$. Equation~(\ref{2NipPotLR}) can be regarded as a sort
	of generalization of the notion of nilpotent matrix to the case in which
	the property involves two matrices. 
	
	This problem has generally infinite solutions. Any pair of matrices $L$, $R$
	such that the image space of $R$ is in the kernel of $L$ is a solution.
	The pair $L=\bigl({-2\atop 2}{1\atop -1}\bigr)$ and $R=\bigl({1\atop 2}{-2\atop -4}\bigr)$
	is an example. The image space of $R$ is in the kernel of $L$. Of course,
	matrix $R$ must be singular, otherwise its image space would invade 
	the whole $\bbR^{2}$ and $\Ker(A) = \left\{ 0 \right\}$, which would require
	matrix $L=0$.
\end{example}
As discussed in Example~\ref{ThreeNodeEx}, in general we 
need the  satisfaction of monomial equations that also involve $x \in \bbR^{n}$.

\begin{example} 
Suppose we are given $T=(x,L,R)$ where $L=\bigl({2\atop 2}{-2\atop -2}\bigr)$ and $R=\bigl({1\atop 1}{-1\atop -1}\bigr)$.
We can promptly see that $L^{2}=R^{2}=0$, and $[L, R] = L R - R L = 0.$
The last one comes out in any case in which $R = \alpha L$, with $\alpha \in \bbR$ (here $\alpha = 1/2$).
We can immediately conclude that any pair $(L,R)$, where $L^{2}=0$ and $R = \alpha L$
corresponds with a balanced tree composed of three nodes. 
\[\mpfig9,\qquad L^{2}=R^{2}=L R = RL=0.\]
Notice that in order to define the formal correspondence with this non-void balanced tree
we need to restrict to the condition $x \not \in \Ker L$. 
On the opposite, if we choose $x = \beta (1,1)'$
with $\beta \in \bbR \setminus\left\{ 0 \right\}$ then the triple represents a tree composed of the root only. 
If $x = 0$ then the triple degenerates to one of the infinite representations of the {\tt void} tree.

Now, let us consider the problem of mapping the above tree in the 
representation $(L,x,R)$. We need to match the labels $\root({\tt T}), \ y(L)$
and $y(R)$. Hence we must impose:
\[C x = \root({\tt T}),\qquad
	C L x = y(L),\qquad
	C R x =  y(R).\]
Since $R = \alpha L$ we have $y(R) = \alpha C L x = \alpha y(L)$.
This clearly indicates that while the representation 
$(L,x,\alpha L)$ is a balanced tree, there is a strong restriction on the
label that it can produce. 
\label{InsightCompleteT}
\end{example}

\subsubsection{Paths and monomial correspondence.}
The discussion on the representation of trees in the real field given in the
previous examples enlightens on a nice connection between paths
and monomials. In order to decode a certain node we generally need
to associate nodes with monomials like
\[ L, \quad R, \quad L^{2}, \quad LR, \quad RL, \quad R^{2},   \quad L^{3}, \quad L^{2} R,\quad R L^{2},\quad R^{3},\quad LRL,\quad RLR, \ldots
\]
composed with the two variables $L$ and $R$. This kind of monomials turn out
to be just another way of expressing a path in a tree.
The above monomial are of
degree $3$, but we are interested in monomials of any order, which can 
be represented by the language generated with symbols $L$ and $R$.
For instance, the sequence
\[
LRLLRLLLLRRLRLRLR = (LR) \cdot 
(L^{2}) \cdot (R^{1})  \cdot
(L^{2})^{2}\cdot(R^{2})
\cdot(LR)^{3}
\]
is a way of constructing a monomial with $L$ and $R$, that could also be
regarded as an element of the language generated by $S_{1}=R$, $S_{2}=L^2$,
$S_{3}=LR$. 
This monomials can be described as follows. Let 
$\ell_\nu$ and $r_\nu$ be the integer vectors that count the repetitions of $L$
and $R$ is the sequence, respectively. In the above sequence we have
\begin{eqnarray*}
	\ell^\nu &= (1,2,4,1,1,1) \\
	r^\nu &= (1,1,2,1,1,1).
\end{eqnarray*}
This notation makes is possible to express the sequence as
\[
	\pi^{\nu}= LRLLRLLLLRRLRLRLR
	:= L^{(1,2,4,1,1,1)} R^{(1,1,2,1,1,1)} = L^{\ell^\nu} R^{r^\nu},
\]
where we assume that the above path characterizes node $\nu$.
Consistently with what we have done so far will indicate the label on the node
$\nu$ with the notation $y(\pi^\nu)\in\mathscr{Y}$.
Here, the notations $L^{\ell^\nu} R^{r^\nu}$
reminds us of a generalized notion of
matrix power for the matrices $L$ and $R$. The notation used for $\pi^{\nu}$
reminds the characterization of the node $\nu$, while the generic arc of the
path $\pi^{\nu}$ is simply an element $\pi^{\nu}_\kappa$ of vector $\pi^{\nu}$.
Moreover, we also use the notation 
$|\ell^{\nu}| = \sum_{\kappa} \ell^\nu_\kappa$ and 
$|r^{\nu}| = \sum_{\kappa} r^\nu_\kappa$.
Clearly $|\pi^{\nu}|=|\ell^{\nu}| + |r^{\nu}|$.

Example~\ref{InsightCompleteT} gives an insight to draw the following general conclusion
\begin{proposition}
	Let $\alpha \in \bbR$ and $R = \alpha L$ be. Moreover, let us
assume that $h \in \bbN$ and $h \geq 1$ is the first integer such
$L^{h} = 0$. If $x \not \in \Ker L^{h-1}$ then the decoding of the triple $T=(L,x,R)$ is a balanced tree {\tt T} with height $h$.
\label{BalParalCond}
\end{proposition}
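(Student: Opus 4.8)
The plan is to reduce the whole decoding of $T=(L,x,R)$ to the behaviour of a single sequence of vectors. First I would record the elementary algebraic fact that drives everything: since $R=\alpha L$ as elements of $\bbR^{n\times n}$, any word $w=s_{1}s_{2}\cdots s_{k}$ with each $s_{i}\in\{L,R\}$ collapses to $w=\alpha^{\,j}L^{k}$, where $j$ is the number of indices $i$ with $s_{i}=R$; no induction is needed, one simply pulls the scalar $\alpha$ out of each factor equal to $R$. In particular $L$ and $R$ commute and every monomial that can occur along a path of the decoding is a scalar multiple of a pure power of $L$.

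Next I would translate the decoding procedure described above into this language. A node reached from the root by a path word $w$ of length $k$ (say with $j$ occurrences of $R$) carries the embedding vector $v_{w}=wx=\alpha^{\,j}L^{k}x$ and the label $Cv_{w}=\alpha^{\,j}CL^{k}x$. Assuming $\alpha\neq 0$ (see the last paragraph), $v_{w}=0$ iff $L^{k}x=0$; and since $Lv_{w}=L^{k+1}x$ and $Rv_{w}=\alpha L^{k+1}x$, the node $v_{w}$ is a leaf (both children buds) iff $L^{k+1}x=0$. Hence the combinatorial shape of the generated tree is governed entirely by the vanishing pattern of $x,\,Lx,\,L^{2}x,\,\dots$, independently of which particular path of a given length one follows.

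Then I would pin down that pattern from the hypotheses. Because $h$ is the least integer with $L^{h}=0$ we get $L^{k}x=0$ for all $k\geq h$; because $x\notin\Ker L^{h-1}$ we have $L^{h-1}x\neq 0$, and therefore $L^{k}x\neq 0$ for every $k$ with $0\leq k\leq h-1$ (if $L^{k}x$ vanished for some such $k<h-1$ then $L^{h-1}x=L^{h-1-k}(L^{k}x)=0$). Feeding this into the previous step: at each depth $k$ with $0\leq k\leq h-1$ all $2^{k}$ path words give non-null nodes, the nodes at depths $0,\dots,h-2$ are internal (their children lie at depth $\leq h-1$ and are non-null), and every node at depth $h-1$ is a leaf whose two children are buds ($L^{h}x=0$). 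So ${\tt T}$ is exactly the complete binary tree with all leaves at the same depth, i.e.\ a \emph{balanced} tree, of height $h$ --- consistent with the root-only case $h=1$, where $L=0$ and $x\neq 0$.

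The argument is short, and the two points that require care rather than cleverness are: (i) fixing the precise meaning of ``balanced'' --- I would take it to be ``every internal node has two children and all leaves are at the same depth'' --- and checking that the level/height convention agrees with Example~\ref{InsightCompleteT} (height $=$ number of levels $=$ depth of the leaves $+\,1$); and (ii) the degenerate value $\alpha=0$, for which $R=0$, the right child of every node is a bud, and the decoding produces a left path rather than a balanced tree unless $h=1$. The cleanest route is thus to prove the proposition under the standing assumption $\alpha\neq 0$ (the case of interest, and the one occurring in Example~\ref{InsightCompleteT}) and to dismiss $\alpha=0$ as a trivial exception. The only genuinely technical step is the bookkeeping in the third paragraph that matches ``$L^{k}x\neq 0$ for $k\le h-1$ and $=0$ for $k\ge h$'' to ``full binary structure through level $h-1$, buds at level $h$''; once $R=\alpha L$ is used, everything else is immediate.
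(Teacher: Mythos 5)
Your argument is correct, and it is considerably more substantive than what the paper provides: the paper's entire proof is the one-line assertion that the claim ``can be given straightforwardly by induction on $h$.'' Where the paper would induct on the nilpotency index, you argue directly: the identity $R=\alpha L$ collapses every path word of length $k$ to $\alpha^{j}L^{k}$, so the shape of the decoded tree is governed solely by the vanishing pattern of the sequence $x, Lx, L^{2}x,\dots$, and the hypotheses pin that pattern down exactly ($L^{k}x\neq 0$ for $k\leq h-1$, $L^{k}x=0$ for $k\geq h$). This buys you a uniform statement about all $2^{k}$ nodes at each depth in one stroke, with no inductive bookkeeping, and it makes transparent \emph{why} the tree is balanced (all paths of equal length carry proportional vectors). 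You also catch something the paper misses: for $\alpha=0$ the proposition as literally stated is false when $h\geq 2$, since $R=0$ makes every right child a bud and the decoding yields a left chain rather than a balanced tree; your decision to prove the result under the standing assumption $\alpha\neq 0$ and flag $\alpha=0$ as a degenerate exception is the right call, and is a genuine (if minor) correction to the statement. The only small point to be careful about, which you already note, is matching the paper's height convention (root-only tree has height $1$, so nodes occupy depths $0$ through $h-1$).
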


\begin{proof}
	The proof can be given straightforwardly by induction on $h$.
\QED
\end{proof}

The possible generation of infinite trees raises the question on which conditions
we need to impose in order to gain the guarantee that a given representation
yields finiteness. In addition to the condition stated in Proposition~\ref{BalParalCond},
in the next section we will present another class of representations which gives
rise to finite tree. The following proposition states a general property on the
generation of ``vanishing trees". 

\begin{proposition}
	Given $T=(L,x,R)$ let us assume that $\Vert L \Vert <1, \ 
	\Vert R \Vert <1$. 
	Then if $\nu$ is a leaf of path $\pi^\nu$
	\[
	\lim_{|\pi| \rightarrow \infty} L^{\ell^{\nu}} R^{r^{\nu}} x = 0.
	\] 
\label{VanishingProp}
\end{proposition}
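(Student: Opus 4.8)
The plan is to exploit sub-multiplicativity of the matrix norm together with the observation that, by construction, $L^{\ell^{\nu}} R^{r^{\nu}}$ is a product of exactly $|\pi^{\nu}| = |\ell^{\nu}| + |r^{\nu}|$ factors, each of which equals either $L$ or $R$ (this is precisely the path/monomial correspondence discussed above). First I would set $q := \max\{\Vert L\Vert,\Vert R\Vert\}$, so that $q < 1$ by hypothesis. Writing the monomial $\pi^{\nu}$ explicitly as a word $A_{1}A_{2}\cdots A_{m}$ with $m = |\pi^{\nu}|$ and each $A_{i}\in\{L,R\}$, and using that the norm is sub-multiplicative and compatible with the Euclidean norm on $\bbR^{n}$, I get
\[
\Vert L^{\ell^{\nu}} R^{r^{\nu}} x\Vert \;\le\; \Vert A_{1}\Vert\,\Vert A_{2}\Vert\cdots\Vert A_{m}\Vert\;\Vert x\Vert \;\le\; q^{m}\,\Vert x\Vert \;=\; q^{|\pi^{\nu}|}\,\Vert x\Vert .
\]
Equivalently, independently of how the $L$'s and $R$'s are interleaved along the path, $\Vert L^{\ell^{\nu}} R^{r^{\nu}}\Vert \le \Vert L\Vert^{|\ell^{\nu}|}\Vert R\Vert^{|r^{\nu}|} \le q^{|\pi^{\nu}|}$.

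The second and final step is immediate: since $q\in[0,1)$ we have $q^{m}\to 0$ as $m\to\infty$, so letting $|\pi^{\nu}| = |\pi| \to\infty$ forces the right-hand side of the displayed inequality to $0$, whence $L^{\ell^{\nu}} R^{r^{\nu}} x\to 0$. (As a by-product, $C L^{\ell^{\nu}} R^{r^{\nu}} x\to 0$ as well, since $C$ is a fixed linear map, which is the statement in the ``node value'' language.)

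There is essentially no computational obstacle here; the only point that deserves care is the precise meaning of the limit $|\pi|\to\infty$. The statement should be read along any infinite branch of the generated tree, i.e.\ any chain of nested paths $\pi^{\nu_{1}}\subsetneq\pi^{\nu_{2}}\subsetneq\cdots$ with $|\pi^{\nu_{k}}|\to\infty$; along such a chain the associated node vectors converge to the bud value $0$. The bound $q^{|\pi|}\Vert x\Vert$ obtained above does not depend on which branch is chosen, so the convergence is in fact uniform over all paths of a given length, which makes precise the sense in which a contractive representation generates a ``vanishing tree''.
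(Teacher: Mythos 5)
Your proof is correct and follows essentially the same route as the paper: bound $\Vert L^{\ell^{\nu}} R^{r^{\nu}} x\Vert$ by sub-multiplicativity with a common upper bound $q<1$ on $\Vert L\Vert$ and $\Vert R\Vert$, obtaining a factor $q^{|\pi|}$ that vanishes as $|\pi|\to\infty$. The only cosmetic difference is that the paper phrases the bound through $y(\pi^{\nu})=CL^{\ell^{\nu}}R^{r^{\nu}}x$ (so a factor $\Vert C\Vert$ appears), which you note as a by-product.
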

\begin{proof}
	We have 
	\[
		y(\pi^\nu) = C L^{\ell^{\nu}} R^{r^{\nu}} x. 
	\]
	When taking the norm on both sides
\[
		\Vert y(\pi^\nu) \Vert \leq 
		\Vert C \Vert \cdot
		\Vert L^{\ell^\nu} R^{r^\nu} \Vert
		\cdot 
		\Vert x \Vert
\]
	Now, let $\theta <1$ be an upper bound of $\Vert L \Vert$ 
	and $\Vert R \Vert$.
	Then
\[
		\Vert y_{\nu} \Vert \leq 
		\Vert C \Vert \cdot
		\Vert x \Vert
		\cdot \theta^{|\pi|}.
\]
	Finally, the proof follows when computing $\lim_{|\pi| \rightarrow \infty}$.
\QED
\end{proof}

We are now ready to formulate the decoding problem in its general form.

\subsubsection{Decoding Problem.} 
Given the tree 
${\tt T}$ with $m$ nodes we consider the equations
\[
  \begin{cases}
    CL^{\ell^\nu}R^{r^\nu}x=y(\pi^\nu)&\text{for all nodes $\nu$};\\
    CL^{\ell^\beta}R^{r^\beta}x=0&\text{for all buds $\beta$},
    \end{cases}
\]
which refers to the nodes and to the buds, respectively
(remember that a binary tree with $m$ nodes has $m+1$ buds). 
When using the vectorial form, we ca rewrite this conditions
in the form $Mx=y$ where
\[
M:=  \begin{pmatrix}
    CL^{\ell^1}R^{r^1}\\
    CL^{\ell^2}R^{r^2}\\
    \vdots\\
    CL^{\ell^m}R^{r^m}\\
    CL^{\ell^{m+1}}R^{r^{m+1}}\\
    \vdots\\
     CL^{\ell^{2m+1}}R^{r^{2m+1}}
   \end{pmatrix}
  \qquad\hbox{and}\qquad
y:=  \begin{pmatrix}
    y(\pi^1)\\
    y(\pi^2)\\
    \vdots\\
    y(\pi^m)\\
    0\\
    \vdots\\
    0
   \end{pmatrix}.
\]
\begin{definition}
The representation $(L,x,R)$ of\/  ${\tt T}$
is completely reachable if and only if\/ $\rank M= \min \left\{n, p \cdot
  (2m+1)\right\}$.
\end{definition}

\begin{proposition}
	Let us consider any completely reachable pair $(L,R)$ 
	of ${\tt T}$. If  $n \geq p \cdot (2m+1)$ then
	the decoding problem
	of\/ ${\tt T}$ admits the solution 
	\[
		x = M^+ y,
	\]
	where $M^+$ is Penrose pseudo-inverse of $M$.
\end{proposition}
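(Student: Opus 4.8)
The plan is to show that the complete-reachability hypothesis, \emph{combined with} the assumption $n \geq p\cdot(2m+1)$, forces $M$ to have full row rank, and that for a full-row-rank matrix the vector $M^{+}y$ is automatically a genuine solution of the decoding system $Mx=y$. So the statement, once unpacked, is essentially a restatement of two standard facts about the Moore--Penrose pseudo-inverse.

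First I would record the shapes of the objects involved. By construction $M$ is a $\bigl(p\cdot(2m+1)\bigr)\times n$ matrix and $y\in\bbR^{p\cdot(2m+1)}$, so in general $\rank M\le\min\{n,\,p\cdot(2m+1)\}$. Under the standing assumption $n\ge p\cdot(2m+1)$ we have $\min\{n,\,p\cdot(2m+1)\}=p\cdot(2m+1)$, and therefore the definition of complete reachability reads here $\rank M=p\cdot(2m+1)$; that is, $M$ has full row rank. Equivalently, the rows of $M$ are linearly independent, and the column space of $M$ is all of $\bbR^{p\cdot(2m+1)}$. In particular $y$ lies in the range of $M$, so the decoding system $Mx=y$ is consistent.

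Next I would invoke the relevant properties of $M^{+}$. The cleanest route: full row rank makes $MM^{\prime}$ invertible, hence $M^{+}=M^{\prime}(MM^{\prime})^{-1}$ is a genuine right inverse of $M$, i.e. $MM^{+}=I_{p\cdot(2m+1)}$; consequently $M(M^{+}y)=(MM^{+})y=y$, which is exactly the claim. Alternatively one may argue via $MM^{+}$ being the orthogonal projector of $\bbR^{p\cdot(2m+1)}$ onto $\operatorname{range}(M)$: since $y\in\operatorname{range}(M)$ by the previous paragraph, $MM^{+}y=y$, so $x:=M^{+}y$ solves $Mx=y$. Either way the verification is immediate once full row rank is in hand.

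There is no real obstacle here; the only points needing a little care are the bookkeeping in the definition of complete reachability --- checking that the $\min$ collapses to $p\cdot(2m+1)$ precisely in the regime $n\ge p\cdot(2m+1)$, so that ``completely reachable'' means \emph{full row rank} (not full column rank) in this case --- and noting that $MM^{\prime}$ is invertible exactly because the rows of $M$ are independent. One could optionally append the remark that $x=M^{+}y$ is moreover the solution of least Euclidean norm among all solutions of $Mx=y$, which is the usual reason for singling it out, but this is not required by the statement as given.
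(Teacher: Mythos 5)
Your proof is correct, and in fact the paper states this proposition without supplying any proof at all; your argument --- that $n\ge p\cdot(2m+1)$ collapses the $\min$ in the reachability condition to $p\cdot(2m+1)$, so $M$ has full row rank, $MM'$ is invertible, $M^{+}=M'(MM')^{-1}$ is a right inverse, and hence $M(M^{+}y)=y$ --- is exactly the standard reasoning the authors evidently intend. The one point worth keeping explicit, which you do handle, is that complete reachability alone does not guarantee consistency of $Mx=y$; it is only in the regime $n\ge p\cdot(2m+1)$ that it forces surjectivity of $M$ and hence solvability for every $y$.
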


\section{Non-commutative left-right matrices}
%
%
As we have already seen, $T$ can yield an infinite tree. Here is another example.
\begin{example}
Let us consider the triple $T=\left(L,x,R \right)$ where $L=\bigl({2\atop 4}{-1\atop -2}\bigr)$ and $R=\bigl({1\atop 1}{-1\atop -1}\bigr)$.
We can promptly see that $L^{2}=0$ and $R^{2}=0$, but  $[L, R] \neq 0$.
In particular
\[
		[L,R] = 
		\begin{pmatrix}
			2 & -1 \\
			4 & -2
		\end{pmatrix}
		\cdot
		\begin{pmatrix}
			1 & -1 \\
			1 & -1
		\end{pmatrix}
		- 
		\begin{pmatrix}
			1 & -1 \\
			1 & -1
		\end{pmatrix}
		\cdot
		\begin{pmatrix}
			2 & -1 \\
			4 & -2
		\end{pmatrix}
		= 
		\begin{pmatrix}
			3 & -2 \\
			4 & -3
		\end{pmatrix}
\]
We can easily check that the recursive propagation yields an infinite tree.
\end{example}

No matter whether a finite or an infinite tree is generate, a uniform representation
$(T,\gamma)$ is especially interesting whenever $[L,R] \neq 0$.
In the opposite case, as already seen, the representation is dramatically limited. 
The following example suggests to consider a nice class of uniform 
non-commutative representations.
%
%
The following example shows a representation $(L,x,R)$ which yields
finite trees.

\begin{example}
	Let us consider the triple $T=\left(L,x,R \right)$ where
\[
		L = \begin{pmatrix}
			0 & 0 & 0 \\
			b_{l} & 0 & 0 \\
			a_{l} & c_{l} & 0\end{pmatrix}
\qquad
		R = \begin{pmatrix}
			0 & 0 & 0 \\
			b_{r} & 0 & 0 \\
			a_{r} & c_{r} & 0
		\end{pmatrix}
\]
Let $a_{l},b_{l},c_{l},a_{r},b_{r},c_{r}$ be non-null reals and associate any non-null
real with symbol $\odot$.
Then we have
\[
\begin{split}
	|\pi^\nu|=2 \to  \pi^\nu &= 
               \begin{pmatrix}
			0 & 0 & 0 \\
			\odot & 0 & 0 \\
			\odot & \odot & 0
		\end{pmatrix}\cdot
		\begin{pmatrix}
			0 & 0 & 0 \\
			\odot & 0 & 0 \\
			\odot & \odot & 0
		\end{pmatrix}=
		\begin{pmatrix}
			0 & 0 & 0 \\
			0 & 0 & 0 \\
			\odot  & 0 & 0
		\end{pmatrix}
	\\
	|\pi^\nu|=3 \to \pi^\nu &= 
		\begin{pmatrix}
			0 & 0 & 0 \\
			0 & 0 & 0 \\
			\odot & 0 & 0
		\end{pmatrix}\cdot 
		\begin{pmatrix}
			0 & 0 & 0 \\
			\odot & 0 & 0 \\
			\odot & \odot & 0
		\end{pmatrix}= 0
\end{split}\]
This corresponds with the balanced tree in Fig.~1.

\begin{figure}[t]
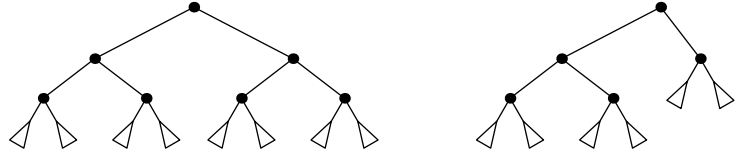

\DeclareGraphicsRule{.10}{mps}{*}{}\DeclareGraphicsRule{.11}{mps}{*}{}
\vbox{\hfil\includegraphics{\figfile.10}\hfil
\includegraphics{\figfile.11}\hfil}
\caption{Balanced tree on the left in the case of non-null coefficients.
  If $b_{r}=0$ then the asymmetry yields the unbalanced tree on the right.}
\end{figure}

Now, we can exploit the non-commutativity $[L,R] \neq 0$ to generate
other trees with missing nodes. We easily see that if $b_{r}=0$ then
$RL=R^{2}=0$ (see Fig.~1).
\end{example}

\section{Conclusions}
The encoding-decoding scheme presented in this paper opens the doors
to new learning algorithms that seem to be adequate in computational linguistics.
A different path that may be followed is the one of restricting to commuting matrices
where different matrices are used for any layer. 

\section*{Acknowledgments}
We thank Ilaria Cardinali for insightful discussions. 

\bibliography{nn}
\bibliographystyle{plain}

\end{document}